\documentclass[sigconf]{aamas}  

\AtBeginDocument{%
  \providecommand\BibTeX{{%
    \normalfont B\kern-0.5em{\scshape i\kern-0.25em b}\kern-0.8em\TeX}}}
    
\usepackage{booktabs}    
\usepackage{flushend} 
\setcopyright{ifaamas}  
\copyrightyear{2020} 
\acmYear{2020} 
\acmDOI{} 
\acmPrice{} 
\acmISBN{} 
\acmConference[AAMAS'20]{Proc.\@ of the 19th International Conference on Autonomous Agents and Multiagent Systems (AAMAS 2020)}{May 9--13, 2020}{Auckland, New Zealand}{B.~An, N.~Yorke-Smith, A.~El~Fallah~Seghrouchni, G.~Sukthankar (eds.)}  

\usepackage{color}
\usepackage{courier}
\usepackage{tikz}
\usetikzlibrary{calc}
\usetikzlibrary{backgrounds,positioning}
\usetikzlibrary{decorations.pathreplacing}
\usepackage{diagbox}

\usepackage{xcolor}
\usepackage[linesnumbered,ruled,vlined,procnumbered]{algorithm2e}
\usepackage{etoolbox}

\makeatletter
\AtBeginEnvironment{procedure}{\let\c@algocf\c@procedure}
\makeatother
\usepackage{amsmath}
\DeclareMathOperator*{\argmax}{argmax}
\DeclareMathOperator*{\argmin}{argmin}

\SetCommentSty{mycommfont}

\SetKwInput{KwInput}{Input}                
\SetKwInput{KwOutput}{Output}

\tikzset{
	rednode/.style = {rectangle, draw=black!60, fill=red!20, very thick, minimum size=5mm,outer sep=0pt},
	bluenode/.style = {rectangle, draw=black!60, fill=blue!20, very thick, minimum size=5mm,outer sep=0pt},
	greennode/.style = {rectangle, draw=black!60, fill=green!20, very thick, minimum size=5mm,outer sep=0pt},
	whitenode/.style = {rectangle, draw=black!60, fill=white!20, very thick, minimum size=5mm,outer sep=0pt},
}
\begin{document}

\title{AED: An Anytime Evolutionary DCOP Algorithm}  


\author{Saaduddin Mahmud}
\affiliation{%
  \institution{Department of Computer Science and Engineering, University of Dhaka} 
}
\email{saadmahmud14@gmail.com}
\author{Moumita Choudhury}
\affiliation{%
	\institution{Department of Computer Science and Engineering, University of Dhaka} 
}
\email{moumitach22@gmail.com}
\author{Md. Mosaddek Khan}
\affiliation{%
	\institution{Department of Computer Science and Engineering, University of Dhaka} 
}
\email{mosaddek@du.ac.bd}
\author{Long Tran-Thanh}
\affiliation{%
	\institution{School of Electronics and Computer Science, University of Southampton} 
}
\email{ltt08r@ecs.soton.ac.uk}
\author{Nicholas R. Jennings}
\affiliation{%
	\institution{Departments of Computing and Electrical and Electronic Engineering, Imperial College London} 
}
\email{n.jennings@imperial.ac.uk}

\renewcommand{\shortauthors}{S. Mahmud et al.}

\begin{abstract}
  Evolutionary optimization is a generic population-based metaheuristic that can be adapted to solve a wide variety of optimization problems and has proven very effective for combinatorial optimization problems. However, the potential of this metaheuristic has not been utilized in Distributed Constraint Optimization Problems (DCOPs), a well-known class of combinatorial optimization problems prevalent in Multi-Agent Systems. In this paper, we present a novel population-based algorithm, Anytime Evolutionary DCOP (AED), that uses evolutionary optimization to solve DCOPs. In AED, the agents cooperatively construct an initial set of random solutions and gradually improve them through a new mechanism that considers an optimistic approximation of local benefits. Moreover, we present a new anytime update mechanism for AED that identifies the best among a distributed set of candidate solutions and notifies all the agents when a new best is found. In our theoretical analysis, we prove that AED is anytime. Finally, we present empirical results indicating AED outperforms the state-of-the-art DCOP algorithms in terms of solution quality.
\end{abstract}

%

\keywords{Distributed Problem Solving, DCOPs} 

\maketitle


\section{Introduction}
Distributed Constraint Optimization Problems (DCOPs) are a widely used framework to model constraint handling problems in cooperative Multi-Agent Systems (MAS). In particular, agents in this framework need to coordinate value assignments to their variables in such a way that minimizes constraint violations by optimizing their aggregated costs \cite{yokoo1998distributed}. This framework has been applied to various areas of multi-agent coordination, including distributed meeting scheduling \cite{Maheswaran2004TakingDT}, sensor networks \cite{farinelli2014agent}\cite{Choudhury2019APS} and smart grids \cite{fioretto2017distributed}.\par
Over the last two decades, several algorithms have been proposed to solve DCOPs, and they can be broadly classified into two classes: exact and non-exact. The former always provide an optimal solution of a given DCOP. Among the exact algorithms, SyncBB \cite{Hirayama1997DistributedPC}, ADOPT \cite{modi2005adopt}, DPOP \cite{Petcu2005ASM}, AFB \cite{AMIRGER2010AsynchronousFB}, BnB-ADOPT \cite{Yeoh2008BnBADOPTAA}, and PT-FB \cite{litov2017forward} are widely used. Since solving DCOPs optimally is NP-hard, scalability becomes an issue as the system grows. In contrast, non-exact algorithms compromise some solution quality for scalability. As a consequence, diverse classes of non-exact algorithms have been developed to deal with large-scale DCOPs. Among them, local search based algorithms are generally most inexpensive in terms of computational and communication cost. Some well-known algorithms of this class are DSA \cite{zhang2005distributed}, MGM \& MGM2 \cite{Maheswaran2004Distributed}, and GDBA \cite{okamoto2016distributed}. Also, in order to further enhance solution quality and incorporate an anytime property in local search based algorithms, the Anytime Local Search (ALS) framework \cite{zivan2014explorative} was introduced. While inference based non-exact approaches such as Max-Sum \cite{farinelli2008decentralised}\cite{speed}\cite{gene} and Max-Sum\_ADVP \cite{zivan2012max} have also gained attention due to their ability to handle n-ary constraints explicitly and guarantee optimality on acyclic constraint graphical representations of DCOPs. The third class of non-exact approaches that have been developed are sample-based algorithms (e.g. DUCT \cite{Ottens2012DUCTAU} and PD-Gibbs \cite{dgibbs}) in which the cooperative agents sample the search space in a decentralized manner to solve DCOPs.\par
More recently, a new class of non-exact DCOP algorithms have emerged in the literature through the introduction of a population-based algorithm ACO\_DCOP \cite{Chen2018AnAA}. ACO\_DCOP is derived from a centralized population-based approach called Ant Colony Optimization (ACO) \cite{Dorigo2006AntCO}. It has been empirically shown that ACO\_DCOP produces solutions with better quality than the aforementioned classes of non-exact DCOP algorithms \cite{Chen2018AnAA}. It is worth noting that although a wide variety of centralized population-based algorithms exist, ACO is the only such method that has been adapted to solve DCOPs. Among the remaining centralized population-based algorithms, a large portion is considered as evolutionary optimization techniques (e.g. Genetic Algorithm \cite{holland1992adaptation}, Evolutionary Programming \cite{ep}). Evolutionary optimization, as a population-based metaheuristic, has proven very effective in solving combinatorial optimization problems such as Traveling Salesman Problem \cite{Fogel1988AnEA}, Constraint Satisfaction Problem \cite{tsang1990applying}, and many others besides. However, no prior work exists that adapts evolutionary optimization techniques to solve DCOPs. Considering the effectiveness of evolutionary optimization techniques in solving combinatorial optimization problems along with the potential of population-based DCOP solver demonstrated by ACO\_DCOP motivates us to explore this nascent\:area.\par
Against this background, this paper proposes a novel population-based algorithm that uses evolutionary optimization to solve DCOPs. We call this Anytime Evolutionary DCOP (AED). In more detail, AED maintains a set of candidate solutions that are distributed among the agents, and they search for new improved solutions by modifying the candidate solutions. This modification is done through a new mechanism that considers an optimistic approximation of local benefits and utilizes the cooperative nature of the agents. Moreover, we introduce a new anytime update mechanism in order to identify the best among this distributed set of candidate solutions and help the agents to coordinate value assignments to their variables based on the best candidate solution. Our theoretical analysis proves that AED is anytime and empirical evaluation shows its superior solution quality compared to the state-of-the-art non-exact DCOP algorithms. \par
\section{Background}
In this section, we first describe DCOPs and Evolutionary Optimization in more detail. Then, we discuss challenges that need to be addressed in order to effectively extend evolutionary optimization in the context of DCOPs.
\subsection{Distributed Constraint Optimization Problems}
Formally, a DCOP is defined by a tuple $ \langle X,D,F,A,\delta \rangle $ \cite{modi2005adopt} where,
\begin{itemize}
    \item A is a set of agents $\{a_1, a_2, ..., a_n\}$.
    \item X is a set of discrete variables $\{x_1, x_2, ..., x_m\}$, which are being controlled by the set of agents A.
    \item D is a set of discrete and finite variable domains $\{D_1, D_2, ..., D_m\}$, where each $D_i$ is a set containing values which may be assigned to its associated variable $x_i$.
    \item F is a set of constraints $\{f_1,f_2,...,f_l\}$, where $f_i \in F$ is a function of a subset of variables $x^i \subseteq X$ defining the relationship among the variables in $x^i$. Thus, the function $f_i : \times_{x_j \in x^i} D_j \to \!R $ denotes the cost for each possible assignment of the variables in $x^i$. 
    \item $\delta: X \rightarrow
	A$ is a variable-to-agent mapping function \cite{nodeto} which assigns the control of each variable $x_i \in X$ to an agent of $A$. Each variable is controlled by a single agent. However, each agent can hold several variables. 
\end{itemize}
Within the framework, the objective of a DCOP algorithm is to produce $X^*$; a complete assignment that minimizes\footnote{For a maximization problem $\argmin$ is replaced with $\argmax$ in Equation~\ref{eqobj}.} the aggregated cost of the constraints as shown in Equation~\ref{eqobj}.
\vspace{-2mm}
\begin{equation}
    X^* = \argmin_X \sum_{i=1}^{l} f_i(x^i)
    \label{eqobj}
\end{equation}
For ease of understanding, we assume that each agent controls one variable. Thus, the terms `variable' and `agent' are used interchangeably throughout this paper. Figure~\ref{dcopex}a illustrates a sample DCOP using a constraint graph where each node represents an agent $a_i \in A$ labelled by a variable $x_i \in X$ that it controls and each edge represents a function $f_i \in F$ connecting all $x_j \in x^i$. Figure~\ref{dcopex}b shows the corresponding cost tables.
\begin{figure}[t]
\centering
\scalebox{.8}{
  \begin{tikzpicture}
        [
        roundnode/.style={circle, draw=green!60, fill=green!5, very thick, minimum size=7mm},
        ]
        \node[roundnode]    at(1,0)  (x1)                              {$x_1$};
        \node[roundnode]    at(0,-1)  (x2)                              {$x_2$};
        \node[roundnode]    at(2,-1)  (x3)                              {$x_3$};
        \node[roundnode]    at(0,-2)  (x4)                              {$x_4$};
         
        \draw (x1) -- (x2);
        \draw (x1) -- (x3);
        \draw (x3) -- (x2);
        \draw (x4) -- (x2);
        \node  at (1,-3)
        {
            (a) A constraint graph
        };
        \node at (4,0)  {
            \begin{tabular}{|l|c|c|}\hline
            \diagbox[width=2.5em]{$x_1$}{$x_2$}&
              1 & 2 \\ \hline
            1 & 7 & 12 \\ \hline
            2 & 3 & 15 \\ \hline
            \end{tabular}
        };
        \node at (6.55,0)  {
            \begin{tabular}{|l|c|c|}\hline
            \diagbox[width=2.5em]{$x_2$}{$x_3$}&
              1 & 2 \\ \hline
            1 & 2 & 7 \\ \hline
            2 & 11 & 18 \\ \hline
            \end{tabular}
        };
        \node at (4,-1.65)  {
            \begin{tabular}{|l|c|c|}\hline
            \diagbox[width=2.5em]{$x_2$}{$x_4$}&
              1 & 2 \\ \hline
            1 & 8 & 4 \\ \hline
            2 & 15 & 6 \\ \hline
            \end{tabular}
        };
        \node at (6.55,-1.65)  {
            \begin{tabular}{|l|c|c|}\hline
            \diagbox[width=2.5em]{$x_1$}{$x_3$}&
              1 & 2 \\ \hline
            1 & 9 & 13 \\ \hline
            2 & 12 & 5 \\ \hline
            \end{tabular}
        };
        \node  at (5.5,-3)
        {
            (b) Cost tables 
        };
    \end{tikzpicture}
}
\vspace{-2mm}
\caption{Example DCOP}
\label{dcopex}
\vspace{1mm}
\end{figure}
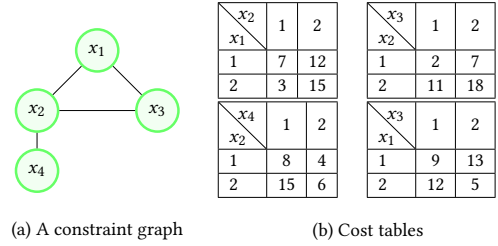
\subsection{Evolutionary Optimization}

Evolutionary optimization is a generic population-based metaheuristic inspired by biological evolutionary mechanisms such as Selection, Reproduction and Migration. The core mechanism of evolutionary optimization techniques can be summarized in three steps. In the first step, an initial population is generated randomly. A population is a set of `individuals', each of which is a candidate solution of the corresponding optimization problem. Besides, a fitness function is defined to evaluate the quality of an individual concerning a global objective. The fitness of all the individuals in the initial population is also calculated. In the second step, a subset of the population is selected based on their fitness to reproduce new individuals. This process is known as Selection. In the final step, new individuals are created using the selected subset of the population and their fitness is evaluated. New individuals then replace a subset of old individuals. Evolutionary optimization performs both the second and the third steps iteratively, which results in a gradual improvement in the quality of individuals. An additional step is performed at regular intervals by some parallel/distributed evolutionary optimization models that concurrently maintain multiple sub-populations instead of a single population. In this step, individuals are exchanged between sub-populations. This process is known as Migration, and this interval is known as the Migration\:Interval. 

\subsection{Challenges}
 We need to address the following challenges in order to develop an effective anytime algorithm that uses evolutionary optimization to solve DCOPs: 
\begin{itemize}
    \item \textbf{Individual and fitness:} We need to define an individual that represents a solution of a DCOP along with a fitness function to evaluate its quality concerning Equation \ref{eqobj}. We also need to provide a method for calculating this fitness function in a distributed manner.
    \item \textbf{Population:} We need to provide a strategy to maintain the population collectively among the agents. Although creating an initial random population is a trivial task for centralized problems, we need to find a distributed method to construct an initial random population for a DCOP. 
    \item \textbf{Reproduction mechanism:} In the DCOP framework, information related to the entire problem is not available to any single agent. So it is necessary to design a Reproduction method that can utilize information available to a single agent along with the cooperative nature of the agents. 
    \item \textbf{Anytime update mechanism:} We need to design an anytime update mechanism that can successfully perform the following tasks -- (i) Identify the best individual in a population that is distributed among the agents. (ii) Notify all the agents when a new best individual is found. (iii) Help coordinate the variable assignment decision based on the best individual in a population.     
\end{itemize}
In the following section, we describe our method that addresses the above challenges. 

\section{The AED Algorithm}
AED is a synchronous iterative algorithm that consists of two phases: Initialization and Optimization. During the former, agents initially order themselves into a pseudo-tree, then initialize the necessary variables and parameters. Finally, they make a random assignment to the variables they control and cooperatively construct the initial population. During the latter phase, agents iteratively improve this initial set of solutions using the cooperation of their neighbours. When an agent detects a better solution, it notifies other agents. Moreover, all the agents synchronously update their assignments based on the best of the individuals reported to them so far. This results in a monotonic improvement of the global objective.  Algorithm~\ref{algo:AED} shows the pseudo-code for AED. For ease of understanding, we show the process of initialization and anytime update separately in Procedure 1 and Procedure 2, respectively. Note that the initialization phase addresses the first two of our challenges, while the optimization phase addresses the rest.\par
\begin{figure}[t]
    \centering
  \scalebox{.8}{
  \begin{tikzpicture}
        [
        roundnode/.style={circle, draw=green!60, fill=green!5, very thick, minimum size=7mm},
        ]
        \node[roundnode]    at(1,0)  (x4)                              {$x_4$};
        \node[roundnode]    at(1,-1)  (x2)                              {$x_2$};
        \node[roundnode]    at(0,-2)  (x1)                              {$x_1$};
        \node[roundnode]    at(2,-2)  (x3)                              {$x_3$};
        \draw (x4) -- (x2);
        \draw (x2) -- (x3);
        \draw (x2) -- (x1);
        \draw[dotted] (x1) -- (x3);
\end{tikzpicture}}
\vspace{-2mm}
\caption{BFS Pseudo-Tree}
\label{algo:pt}
\end{figure}
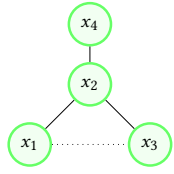
\textbf{The Initialization Phase} of AED  consists of two parts; pseudo-tree construction and running INIT (Procedure 1) that initializes the population, parameters and variables (Algorithm~\ref{algo:AED}: Line 1-2). This phase starts by ordering the agents into a Pseudo-Tree. This ordering serves two purposes. It helps in the construction of the initial population and facilitates ANYTIME-UPDATE (Procedure 2) during the optimization phase. Even though either of the BFS or DFS pseudo-tree can be used, AED uses BFS Pseudo-tree\footnote{We suggest using the algorithm described in \cite{hen2017improved}. Height can be easily calculated by utilizing LAYER information.}. This is because it generally produces a pseudo-tree with smaller height \cite{hen2017improved}, which improves the performance of ANYTIME-UPDATE (see Theoretical Analysis for details). Figure 2 shows an example of a BFS pseudo-tree constructed from the constraint graph shown in Figure 1a having $x_4$ as the root. Here, the height\footnote{Length of the longest path in the pseudo-tree.} (i.e. H = 2) of this pseudo-tree is calculated during the time of construction and is maintained by all agents. From this point, $N_i$ refers to the set of neighbours; $C_i \subseteq N_i$ refers to the set of child nodes and ${PR}_i$ refers to the parent of an agent $a_i$ in the pseudo-tree. For instance, we can see in Figure 2 that $N_2 = \{a_1,a_3,a_4\}$, $C_2 = \{a_1,a_3\}$ and ${PR}_2 = a_4$ for agent $a_2$. After the pseudo-tree construction, all the agents synchronously call the procedure INIT (Algorithm~\ref{algo:AED}: Line 2).\par 
\begin{algorithm}[t]
\DontPrintSemicolon
\small
   Construct pseudo-tree\;
   Every agent $a_i$ calls INIT( )\;
   \While{Stop condition not met each agent $a_i$}
   {
   		$P_{selected} \leftarrow Select_{rp}(|N_i|*ER)$ \;
   		$P_{new} \leftarrow$ Partition $P_{selected}$ into equal size subsets $\{P_{new}^{n_1},...,P_{new}^{n_{|N_i|}}\}$\;
   		\For{$n_j \in N_i$}{
   		    Modify individuals in $P_{new}^{n_j}$ by Equations~\ref{eqw1},~\ref{eqw2},~\ref{eqw3},~\ref{eqdel}\;
   		    Send message $P_{new}^{n_j}$ to $n_j$\;
   		}
   		\For{$P_{received}^{n_{j}}$ received from $n_j \in N_i$}{
   		    Modify individuals in $P_{received}^{n_{j}}$ by Equations ~\ref{eqg1},~\ref{eqdel}\;
   		    Send $P_{received}^{n_{j}}$ to $n_j$\;
   		}
   		\For{$n_j \in N_i$}{
   		    Receive $P_{new}^{n_j}$ back from $n_j$
   		}
   		$P_{a_i} \leftarrow P_{a_i} \cup P_{new}$\; 
   		$B \leftarrow \argmin_{I \in P_{a_i}} I.fitness$\;
   		$ANYTIME-UPDATE(B)$\;
   		$P_{a_i} \leftarrow Select_{wrp}(|N_i|*ER)$ \;
   		\uIf{$Itr$ = $Itr_M+MI$ }{
   		    \For{$n_j \in N_i$}{
                Send $Select_{wrp}(ER)$ to $n_j$\;
            }
            \For{$P_{received}^{n_{j}}$ received from $n_j \in N_i$}{
   		        $P_{a_i} \leftarrow P_{a_i} \cup P_{received}^{n_{j}}$\;
   		    }
            
          }
   }
\caption{Anytime Evolutionary DCOP}
\label{algo:AED}
\end{algorithm}
INIT starts by initializing all the parameters and variables to their default values\footnote{AED takes a default value for each of the parameters as input. Default values of the variables are discussed later in this section.}.  Then each agent $a_i$ sets its variable $x_i$ to a random value from its domain $D_i$. Lines 3 to 25 of Procedure 1 describe the initial population construction process. In AED, we define population \textit{P} as a set of individuals that are collectively maintained by all the agents and local population $P_{a_i} \subseteq P$ as the subset of the population maintained by agent $a_i$. An individual in AED is represented by a complete assignment of variables in X and fitness is calculated using a fitness function shown in Equation~\ref{eqfit}. This function calculates the aggregated cost of constraints yielded by the assignment. Hence, optimizing this fitness function results in an optimal solution for the corresponding DCOP. 
\begin{equation}
    fitness = \sum_{f_i\in F}^{} f_i(x^i) 
    \label{eqfit}
\end{equation}
Note that a single agent can not calculate the fitness function. Rather it is calculated in parts with the cooperation of all the agents during the construction process. Moreover, the fitness value is added to the representation of an individual because it enables an agent to recalculate the fitness when a new individual is constructed only using local information. We take I = $\{x_1=1, x_2=2, x_3=1, x_4=2,fitness = 38\}$ as an example of a complete individual from the DCOP shown in Figure 1. We use dot(.) notation to refer to a specific element of an individual. For example $I.x_1$ refers to $x_1$ in I.
Additionally, we define a Merger operation of two individuals under construction, $I_1, I_2$ as Merge($I_1, I_2$). This operation constructs a new individual $I_3$ by aggregating the assignments and setting $I_3.fitness = I_1.fitness+I_2.fitness$. We define an extended Merge operation for two ordered sets of individuals $S_1$ and $S_2$ as $Merge(S_1,S_2) = \{I_i: Merge(S_1.I_i,S_2.I_i)\}$ where $I_i$ is the i-th individual in a set. \par
 At the beginning of the construction process, each agent $a_i$ sets $P_{a_i}$ to a set of empty individuals\footnote{Individuals with no assignment and fitness set to 0.}. The size of the initial $P_{a_i}$ is defined by parameter IN. Then for each individual $I \in P_{a_i}$, agent $a_i$ makes a random assignment to $I.x_i$. After that each agent $a_i$ executes a merger operation on $P_{a_i}$ with each local population maintained by agents in $N_i$ (Procedure 1: Line 2-8). At this point, an individual $I \in P_{a_i}$ consists of an assignment of variables controlled by $a_i$, and agents in $N_i$ with fitness set to zero. For example, I = $\{x_1=1, x_2=2, x_3=1, fitness = 0\}$ represents an individual of $P_{a_3}$. The fitness of each individual is then set to the local cost according to their current assignment (Procedure 1: Line 9-10). Hence, the individual I from the previous example becomes $\{x_1=1, x_2=2, x_3=1, fitness = 20\}$. In the next step, each agent $a_i$ executes a merger operation on $P_{a_i}$ with each local population that is maintained by the agents in $C_i$. Then each agent $a_i$ sends $P_{a_i}$ to ${PR}_i$ apart from the root (Procedure 1: Line 11-18). At the end of this step, the local population maintained by the root consists of complete individuals. However, their fitness is twice its actual value since each constraint is calculated twice. Therefore, the root agent at this stage corrects all the fitness values (Procedure 1: Lines 20-21). Finally, the local population of the root agent is distributed through the network so that agents can initialize their local population (Procedure 1: Line 22-25). This concludes the initialization phase and after that, all the agents synchronously start the optimization phase in order to improve this initial population iteratively.\par   	
\begin{procedure}[t]
  \caption{INIT()( )}
  \DontPrintSemicolon
  \small
  Initialize algorithm parameters IN, ER, $R_{max}, \alpha, \beta$, MI and variables LB, GB, FM, UM\;
  $x_i \leftarrow$ random value from $D_i$\;
  $P_{a_i} \leftarrow $Set of empty individuals\;
  \For{Individual $I \in P_{a_i}$ }{
    $I.x_i \leftarrow $a random value from $D_i$\;
  }
  Send $P_{a_i}$ to agents in $N_i$  \;
  \For{$P_{n_j}$ received from $n_j \in N_i$}{
    $P_{a_i} \leftarrow Merge(P_{a_i},P_{n_j})$\;
  }
  \For{Individual $I \in P_{a_i}$ }{
    $I.fitness \leftarrow \sum_{n_j\in N_i}^{} Cost_{i,j}(I.x_i,I.x_j) $\;
  }
  \eIf{$|C| = 0$}
  {
      Send $P_{a_i}$ to ${PR}_i$ \;
  }
  {
        Wait until received $P_{c_j}$ from all $c_j \in C_i$\;
        \For{$P_{c_j}$ received from $c_j \in C_i$ }
        {
            $P_{a_i} \leftarrow Merge(P_{a_i},P_{c_j})$\;
        }
        \eIf{$a_i \not= root$}
        {
            Send $P_{a_i}$ to ${PR}_i$\;
        }
        {
            \For{Individual $I \in P_{a_i}$ }
            {
                $I.fitness \leftarrow I.fitness/2 $\;
            }   
            Send $P_{a_i}$ to all agent in $C_i$\;
      }  
  }
  \If{Received $P_{{PR}_i}$ from ${PR}_i$}{
    $P_{a_i} \leftarrow P_{{PR}_i}$\;
    Send $P_{a_i}$ to all agent in $C_i$\;
  }
\end{procedure}

\textbf{The Optimization Phase} of AED consists of five steps, namely Selection, Reproduction, ANYTIME-UPDATE, Reinsertion and Migration. An agent $a_i$ begins an iteration of this phase by selecting individuals from $P_{a_i}$ for the Reproduction step (Algorithm 1: Line 4). Prior to this selection, all the individuals are ranked from $(0, R_{max}]$ based on their relative fitness in the local population $P_{a_i}$. The rank $R_j$ of an individual $I_j \in P_{a_i}$ is calculated using Equation ~\ref{eqs1}. Here, $I_{best}$ and $I_{worst}$ are the individuals with the lowest and highest fitness in $P_{a_i}$ respectively\footnote{For minimization problems, a lower value of fitness is better.}. We define $Select_{rp}(S)$  as the process of taking a sample with replacement\footnote{Any individual can be selected more than once.} of size S from population $P_{a_i}$ based on the probability calculated using Equation ~\ref{eqs2}. As $\alpha$ increases in Equation ~\ref{eqs2}, the fitness vs. selection probability curve gets steeper. As a consequence, individuals with better fitness get selected more often. In this way, $\alpha$ controls the exploration and exploitation dynamics in the Selection mechanism (See Section 5 for more details). For example, assume $P_{a_i}$ consists of 3 individuals $I_1, I_2, I_3$ with fitness 16, 30, 40 respectively and $R_{max} = 5$. Then Equations ~\ref{eqs1} and ~\ref{eqs2} will yield, $P(I_1)=0.676, P(I_2)=0.297, P(I_3)=0.027$ if $\alpha = 1$ and $P(I_1)=0.92153, P(I_2)=0.07842, P(I_3)=0.00005$ if $\alpha = 3$. During this step, each agent $a_i$ selects $|N_i|*ER$ individuals from $P_{a_i}$ which we define as $P_{selected}$. 
\begin{equation}
    R_j =  R_{max}* \frac{|I_{worst}.fitness - I_j.fitness|+1}{|I_{worst}.fitness - I_{best}.fitness|+1}
    \label{eqs1}
\end{equation}
\begin{equation}
    P(I_j) = \frac{R_{j}^{\alpha}}{\sum_{I_k \in P_{a_i}} R_k^{\alpha}}
    \label{eqs2}
\end{equation}\par
Now, lines 5 to 11 of Algorithm 1 illustrate our proposed Reproduction mechanism. Agents start this step by partitioning $P_{selected}$ into $|N_i|$ subsets of size ER. Then each subset is randomly assigned to a unique neighbour. The subset assigned to $n_j \in N_i$ is denoted by $P_{new}^{n_j}$. An agent $a_i$ creates a new individual from each $I\in P_{new}^{n_j}$ with cooperation of neighbour $n_j$. Initially, agent $a_i$ changes assignment $I.{x_i}$ by sampling from its domain $D_i$ using Equations ~\ref{eqw1}, ~\ref{eqw2},~\ref{eqw3}. Then, $P_{new}^{n_j}$ is sent to $n_j$. Agent $n_j$ updates its assignment of $I.{x_j}$ for each $I\in P_{received}^{a_i}$ (i.e. $P_{new}^{n_j}$) using Equation ~\ref{eqg1}. Additionally, both agents $a_i$ and $n_j$ update the fitness of the individual $I$ by adding $\delta_i$ and $\delta_j$ to I.fitness, respectively. Here, $\delta_*$ is calculated using Equation ~\ref{eqdel} where $I.x_*^{new}$ and $I.x_*^{old}$ are the old and new values of $I.x_*$, respectively.\par
\begin{equation}
    O_{d_i} = \sum_{n_k \in N_i\setminus n_j}Cost_{i,k}(I.x_i,I.x_k) + \min_{d_j \in D_j} Cost_{i,j}(I.{x_i},d_j)
    \label{eqw1}
\end{equation}
\begin{equation}
    W_{d_i} = O_{max}* \frac{|O_{worst} - O_{d_i}|+1}{|O_{worst} - O_{best}|+1}
    \label{eqw2}
\end{equation}
\begin{equation}
    P(d_i) = \frac{W_{d_i}^{\beta}}{\sum_{d_k \in D_i}W_{d_k}^{\beta}}
    \label{eqw3}
\end{equation}
 \begin{equation}
    I.x_j = \argmin_{d_j \in D_j} \sum_{n_k \in N_j}Cost_{j,k}(d_j,I.x_k)
    \label{eqg1}
\end{equation}
 \begin{equation}
    \delta_{*} = \sum_{n_k \in N_*}Cost_{*,k}(I.x_{*}^{new},I.x_k)-Cost_{*,k}(I.x_*^{old},I.x_k)
    \label{eqdel}
\end{equation}
For example, agent $a_3$ of Figure 1 creates a new individual from $I = \{x_1=1, x_2=2, x_3=2, x_4=2,fitness = 49\}$ with the help of neighbour $a_2$. Here, the domain of agent $a_3$ and $a_2$ is $\{1,2\}$. Initially, agent $a_3$ calculates P(1) = 0.90 and P(2) = 0.10 using Equation ~\ref{eqw1},~\ref{eqw2},~\ref{eqw3} ($\beta = 1$). It then updates $I.x_3$ by sampling this probability distribution. The fitness is also updated by adding $\delta_i$ (= -11). Let the updated I be $\{x_1=1, x_2=2, x_3=1, x_4=2,fitness = 38\}$, it is then sent to $a_2$. Based on Equation ~\ref{eqg1}, the new value of $I.x_2$ should be 1. Now, agent $a_2$ updates $I.x_2$ along with the fitness by adding $\delta_j$ (= -16) and sends I back to $a_3$. Hence, Agent $a_3$ receives $I = \{x_1=1, x_2=1, x_3=1, x_4=2,fitness = 22\}$.\par    
To summarize the Reproduction mechanism, each agent $a_i$ picks a neighbour $n_j$ randomly for each $I\in P_{selected}$. Agent $a_i$ then updates $I.x_i$ by sampling based on the most optimistic cost (i.e. the lowest cost) of the constraint between $a_i$ and $n_j$ and aggregated cost of the remaining local constraints. This cost represents the optimistic local benefit for each domain value. Then $n_j$ sets $I.x_j$ to a value that complements the optimistic change in $I.x_i$ most. The key insight of this mechanism is that it not only takes into account the improvement in fitness that the change in $I.x_i$ will bring but also considers the potential improvement the change in $I.x_j$ will bring. Moreover, note that the parameter $\beta$ in Equation ~\ref{eqw3} plays a similar role as parameter $\alpha$ in Equation 3 (See Section 5 for details). After collecting the newly constructed individuals from neighbours they are added to $P_{a_i}$ (Algorithm 1: Line 12-14). Then the best individual B in $P_{a_i}$ is sent for ANYTIME-UPDATE (Algorithm 1: Line 15-16).\par 
To facilitate the anytime update mechanism, each agent maintains four variables LB, GB, FM, UM. LB (Local Best) and GB (Global Best) are initialized to empty individuals with fitness set to infinity. FM and UM are initialized to $\emptyset$. Additionally, GB is stored with a version tag and each agent maintains previous versions of GB having version tags in the range $[Itr-H+1, Itr]$ (see the Theoretical section for details). Here, Itr refers to the current iteration number. We use $GB^j$ to refer to the latest version of GB with version tag not exceeding j. Ours proposed anytime update mechanism works as follows. Each agent keeps track of two different best, LB and GB. Whenever the fitness of LB becomes less than GB, it has the potential to be the global best solution. So it gets reported to the root through the propagation of a Found message up the pseudo-tree. Since the root gets reports from all the agents, it can identify the true global best solution, and notify all the agents by propagating an Update message down to the pseudo tree. The root also adds the version tag in the Update message to help coordinate variable assignment. Now, ANYTIME-UPDATE starts by keeping LB updated with the best individual B in $P_{a_i}$. In line 3 of Procedure 2, agents try to identify whether LB is the potential global best. When identified and if the identifying agent is the root, it is the true global best and an Update message UM is constructed. If the agent is not the root, it is a potential global best and a Found message FM is constructed (Procedure 2: Lines 4-8). Each agent forwards the message UM to agents in $C_i$ and the message FM to the ${PR}_i$. Upon receiving these messages, an agent takes the following actions: 
\begin{itemize}
    \item If an Update message is received then an agent updates both its GB and LB. Additionally, the agent saves the Update message in UM and sends it to all the agents in $C_i$ during the next iteration (Procedure 2: Lines 12-15). 
    \item If a Found message is received and it is better than LB, only LB is updated. If this remains a potential global best it will be sent to ${PR}_i$ during next iteration (Procedure 2: Lines 16-17).
\end{itemize}\par
\begin{procedure}[t]
  \caption{ANYTIME-UPDATE(B)}
    \DontPrintSemicolon
    \small
   \If{$B.fitness < LB.fitness$}{
        $LB \leftarrow B$\;
   }
   \If{$LB.fitness < GB^{Itr}.fitness$}{
        \eIf{$a_i = root$}{
            $GB^{itr} \leftarrow LB$\;
            $UM \leftarrow \{Version:Itr,Individual:LB\}$
        }{
            $FM \leftarrow \{Individual:LB\}$
        }
   }
   Send Update Message UM to agents in $C_i$ and Found Message FM to ${PR}_i$ \;
   $FM \leftarrow \emptyset$\;
   $UM \leftarrow \emptyset$\;
   \If{Received update message M and $M \not= \emptyset $}{
        $GB^{M.Version} \leftarrow M.individual$\;
        $LB \leftarrow $Best between LB and M.individual\;
        $UM \leftarrow M$\;
   }
   \If{Received found message M and $M \not= \emptyset $ and $M.individual.fitness < LB.fitness$}{
        $LB \leftarrow M.individual$\;
   }
   \If{$Itr >= H$}{
            $x_i = GB^{Itr-H+1}.x_i$\;
   }
\end{procedure}
An agent $a_i$ then updates the assignment of $x_i$ using $GB^{Itr-H+1}$ (Procedure 2: Lines 18-19). Agents make decisions based on $GB^{Itr-H+1}$ instead of the potentially newer $GB^{Itr}$ so that decisions are made based on the same version of GB. $GB^{Itr-H+1}$ will be same for all agents since it takes at most H iterations for an Update message to propagate to all the agents. For example, assume agent $a_1$ from Figure 2 finds a potential best individual I at $Itr = 3$. Unless it gets replaced by a better individual, it will reach the root $a_4$ via agent $a_2$ through a Found message at $Itr =4$. Then $a_4$ constructs an Update message $\{Version: 5, Individual:I\}$ at $Itr=5$. This message will reach all the agents by $Itr=6$ and the agents save it as $GB^5=I$. Finally, at $Itr=6$ agents assign their variables using $GB^{6-2+1}=GB^{5}$ which is the best individual found at $Itr=3$.\par   
After ANYTIME-UPDATE each agent performs Reinsertion, i.e. updates its local population with new individuals. At first each agent adds newly constructed individuals $P_{new}$ to $P_{a_i}$ (Algorithm 1: line 14). After that, each agent $a_i$ updates their $P_{a_i}$ by keeping a sample of size $|N_i|*ER$ and discarding the rest based on their fitness (Algorithm 1: line 17). This sample is taken using $Select_{wrp}(S)$ which is the same as $Select_{rp}(S,)$ except agents sample without replacement\footnote{Each individual can be selected at most once.}. This sampling method keeps the local population $P_{a_i}$ diverse by selecting a unique set of individuals.\par 
Finally, Migration, an essential step of AED, takes place on every MI iteration. We sketch this in lines 18-22 of Algorithm 1. For this step, we define $Itr_{M}$ as the iteration number when the last Migration occurred. Migration is a simple process of exchanging individuals among the neighbours. In AED, the Reproduction mechanism utilizes local cooperation, so only a subset of variables of an individual change. However, because of Migration, different agents can change a different subset of variables as individuals get to traverse the network through this mechanism. Hence, this step plays an essential role in the optimization process of AED. During this step, an agent $a_i$ selects a sample of size ER using $Select_{wrp}(S)$ for each $n_j\in N_i$ and sends a copy of those individuals to that neighbour. Upon collecting individuals from all the neighbours, an agent, $a_i$ adds them to its local population $P_{a_i}$. This concludes an iteration of the optimization phase and every step repeats during the subsequent iterations..\par 

\section{Theoretical Analysis}
In this section, we first prove that AED is anytime, that is the quality of solutions found by AED increase monotonically. Then we analyze the complexity of AED in terms of communication, computation and memory requirements.
\begin{lemma} 
At iteration $\mathbf{i+H}$, the root agent is aware of the best individual in \textit{P} at least up to iteration $\mathbf{i}$.
\label{lem1}
\end{lemma}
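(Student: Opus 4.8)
The plan is to follow the best individual up the pseudo-tree along the chain of Found messages. Fix the iteration $i$ and write $f^{*}$ for the smallest fitness of any individual that is present in $P$ during some iteration $t\le i$; pick an individual $I^{*}$ with $I^{*}.fitness=f^{*}$, present in the local population $P_{a_{j}}$ of some agent $a_{j}$ at an iteration $i_{0}\le i$ (at the point where $a_{j}$ forms $B=\argmin_{I\in P_{a_{j}}}I.fitness$ on line~15 of Algorithm~\ref{algo:AED}; every individual that is ever in $P$ passes through some agent's local population at such a point, so $a_{j}$ and $i_{0}$ exist). Let $d_{j}\le H$ be the depth of $a_{j}$ in the pseudo-tree. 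I would show that by iteration $i_{0}+d_{j}\le i+H$ the root's variable $LB$ has fitness at most $f^{*}$, which is precisely what it means for the root to be ``aware of the best individual in $P$ up to iteration $i$'' (the root's $GB$ is refreshed from its $LB$, so the statement can equally be read in terms of $GB$).

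First I would settle the base step: at iteration $i_{0}$ we have $B.fitness\le I^{*}.fitness=f^{*}$, and lines~1--2 of ANYTIME-UPDATE force $a_{j}.LB.fitness\le f^{*}$; since $LB$ is only ever overwritten by a strictly better individual, $a_{j}.LB.fitness\le f^{*}$ holds at every iteration $\ge i_{0}$.

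Next I would prove, by induction on $k=0,1,\dots,d_{j}$, the statement: either the root already has $LB.fitness\le f^{*}$ by iteration $i_{0}+k$, or the agent at depth $d_{j}-k$ on the path from $a_{j}$ up to the root has $LB.fitness\le f^{*}$ by iteration $i_{0}+k$. The case $k=0$ is the base step. For the inductive step, take such a path agent $a$ at iteration $i_{0}+k+1$; it still has $LB.fitness\le f^{*}$. If the version of $a$'s $GB$ tagged by the running iteration has fitness $>f^{*}$, then the test on line~3 of ANYTIME-UPDATE fires, $a$ builds $FM=\{Individual:LB\}$ and sends it to $PR_{a}$, which in the same iteration adopts (via its ``received found message'' branch) an individual of fitness $\le f^{*}$ into its own $LB$; this carries the claim one pseudo-tree level closer to the root in one iteration. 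If instead that version of $a$'s $GB$ has fitness $\le f^{*}$, then---since a non-root agent writes $GB$ only in response to an Update message, and the root issues an Update only out of its own $LB$---the root's $LB$ already had fitness $\le f^{*}$ at the iteration stamped on that Update, an iteration $\le i_{0}+k+1\le i+H$, so the first alternative holds. Applying the step $d_{j}\le H$ times gives, at $k=d_{j}$, a root with $LB.fitness\le f^{*}$ by iteration $i_{0}+d_{j}\le i+H$, as required.

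The step I expect to be the main obstacle is precisely this case split: ruling out that a Found message quietly ``dies'' at an intermediate agent while the root stays ignorant. The resolution is that a sufficiently good $GB$ at any non-root agent must have been planted by a root-issued Update, whose version tag then bounds how late the root can have learned that value; one must be careful to take the version of $GB$ indexed by the current iteration (the paper's $GB^{Itr}$) so that this tag is genuinely $\le i+H$. A secondary point to check is that the Found and Update messages are not interleaved with the population messages of the Selection, Reproduction and Migration steps, so a Found message climbs exactly one level of the pseudo-tree per iteration; this is immediate because ANYTIME-UPDATE is called once per iteration with dedicated message types.
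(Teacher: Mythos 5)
Your proof is correct and rests on the same engine as the paper's: a Found message climbs one level of the pseudo-tree per iteration, and the discovering agent sits at depth at most $H$. But the way you run the argument is genuinely different, and in one place it is more complete than the paper's own proof. The paper follows \emph{the individual itself} upward and splits into two cases---it is either forwarded to the parent or replaced en route by a better individual---and disposes of the replacement case with the telescoping identity $i^{*}+l^{*}=i^{\prime}+l^{\prime}$. Your invariant $LB.fitness\le f^{*}$ propagating along the path to the root absorbs replacement for free, since a better individual only strengthens the invariant, so no bookkeeping of where the replacement occurred is needed. More importantly, you explicitly close a case the paper's dichotomy omits: the guard $LB.fitness<GB^{Itr}.fitness$ in Procedure~2 can fail at an intermediate agent, in which case no Found message is built and the individual is neither ``forwarded'' nor ``replaced on its way to the root.'' Your observation that a non-root agent's finite-fitness $GB$ can only have been planted by a root-issued Update, whose version tag then certifies that the root's own $LB$ already had fitness $\le f^{*}$ by an iteration $\le i+H$, is exactly the argument needed to rule out a Found message silently dying, and it does not appear in the paper. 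Two small points, neither affecting validity: in the forwarding step the parent's guard $M.individual.fitness<LB.fitness$ may also fail, but then the parent's $LB$ is already $\le f^{*}$, so the invariant still advances a level; and your arrival bound $i_{0}+d_{j}$ is one iteration slack, since the agent at depth $d_{j}$ already emits its Found message during iteration $i_{0}$ itself, so the root's $LB$ is in fact good by $i_{0}+d_{j}-1$ --- harmless for the lemma as stated, and the tighter accounting is what Lemma~\ref{lem2}'s use of $GB^{i+H}$ implicitly relies on.
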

\begin{proof}
Suppose, the best individual up to iteration $\mathbf{i}$ is found at iteration $\mathbf{i^\prime}\le \mathbf{i}$ by agent $\mathbf{a_x}$ at level $\mathbf{l^\prime}$. Afterwards, one of the following 2 cases will occur at each iteration.
\begin{itemize}
    \item Case 1. This individual will be reported to the parent of the current agent through a Found message.
    \item Case 2. This individual gets replaced by a better individual on its way to the root at iteration $\mathbf{i^{*}}>\mathbf{i^\prime}$ by agent $\mathbf{a_{y}}$ at level\:$\mathbf{l^{*}.}$
\end{itemize}
When only Case 1 occurs, the individual will reach the root at iteration $\mathbf{i^{\prime}}+\mathbf{l^{\prime}} \le \mathbf{i}+\mathbf{H}$ (since ${l^{\prime}}$ can be at most H). If Case 2 occurs, the replaced individual will reach the root agent by $\mathbf{i^*}+\mathbf{l^{*}} = \{\mathbf{i^*} - (\mathbf{l^{\prime}}-\mathbf{l^{*}}) \}+ \{(\mathbf{l^{\prime}}-\mathbf{l^{*}}) + \mathbf{l^{*}}\} = \mathbf{i^{\prime}}+\mathbf{l^{\prime}} \le \mathbf{i}+\mathbf{H}$. The same can be shown when the new individual also gets replaced. In either case, at iteration \textbf{i+H}, the root will become aware of the best individual in \textit{P} up to iteration \textbf{i} or will become aware of a better individual in \textit{P} found at iteration $\mathbf{i^{*}}>\mathbf{i}$; meaning the root will be aware of the best individual in \textit{P} at least up to iteration \textbf{i}. 
\end{proof}
\begin{lemma}
The variable assignment decision made by all the agents at iteration $\mathbf{i+2H-1}$ yield a global cost equal to the fitness of the best individual in \textit{P} at least up to iteration $\mathbf{i}$.
\label{lem2}
\end{lemma}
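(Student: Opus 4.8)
The plan is to combine Lemma~\ref{lem1} with a propagation bound for Update messages and with the invariant that an individual's stored fitness equals the true global cost of the assignment it encodes.

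First I would fix $\phi^{*}$ to be the fitness of the best individual $I^{*}\in P$ up to iteration $i$ and locate \emph{when} the root commits this quality into $GB$. Since the root has no parent it never receives an Update message, so its $GB$ is modified only on the ``$a_i=root$'' branch of ANYTIME-UPDATE, always being overwritten by the current $LB$; hence the root's $GB$ has fitness $>\phi^{*}$ until the first iteration at which its $LB$ drops to $\le\phi^{*}$, and at that iteration the guard $LB.fitness<GB^{Itr}.fitness$ fires and the root broadcasts an Update with some version $v$ carrying an individual of fitness $\le\phi^{*}$. Tracing the timing in the proof of Lemma~\ref{lem1} --- the good individual (or an at-least-as-good replacement) is created at some iteration $i'\le i$ by an agent at some level $l'\le H$, a Found message travels one hop within the iteration that creates it and one hop per later iteration (as in the worked example preceding the section), so it reaches the root by iteration $i'+l'-1$ and the root emits the Update at iteration $v=i'+l'\le i+H$. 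Moreover the root's $LB$ is non-increasing in fitness, so every Update it emits with version in $(v,i+H]$ also carries an individual of fitness $\le\phi^{*}$.

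Next I would prove the propagation bound: an Update created by the root at iteration $t$ is stored as the version-$t$ entry of $GB$ by every agent at level $l$ no later than iteration $t+l-1$ (a short induction on $l$: the root's children receive and store it in the same iteration $t$ under the send-then-receive order of ANYTIME-UPDATE, and each deeper level is reached one iteration later because a received Update is relayed only on the following iteration). Hence every Update with version $v''\le i+H$ reaches \emph{all} agents by iteration $v''+H-1\le i+2H-1$. Therefore at iteration $i+2H-1$ every agent holds exactly the same collection of $GB$ version-entries with tag $\le i+H$; since these entries are copied verbatim down the tree, all agents agree on $GB^{i+H}$ --- the individual of the largest-version Update with version not exceeding $i+H$, which exists because $v\le i+H$ --- and by the previous paragraph $GB^{i+H}.fitness\le\phi^{*}$. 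Finally, at iteration $i+2H-1\ge H$ line~18--19 of ANYTIME-UPDATE makes every agent $a_i$ set $x_i=GB^{(i+2H-1)-H+1}.x_i=GB^{i+H}.x_i$, so the joint assignment is precisely the complete assignment stored inside the single common individual $GB^{i+H}$; appealing to the fitness invariant (an individual's stored fitness equals $\sum_{f_i\in F} f_i(x^i)$ of the assignment it encodes, which holds after the root's halving correction in INIT and is preserved by the incremental $\delta$-updates of Reproduction and by the verbatim copies in Reinsertion and Migration), the global cost of this assignment equals $GB^{i+H}.fitness\le\phi^{*}$, i.e.\ equals the fitness of a member of $P$ that is no worse than the best up to iteration $i$ (which, with monotonicity of $GB$ over versions, will give the anytime property).

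The step I expect to be the main obstacle is the timing bookkeeping: pinning the constant at exactly $2H-1$ (the first hop of a message happens inside the iteration that creates it, while every subsequent hop costs one iteration), and --- more subtly --- recognising that it is not enough for the good individual to reach every agent; one must also show that every Update with a \emph{smaller-or-equal} version has arrived by iteration $i+2H-1$, since otherwise two agents could disagree about which version is the latest $\le i+H$ and hence about $GB^{i+H}$, breaking the claim that the agents make a common decision.
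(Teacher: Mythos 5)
Your proof is correct and takes essentially the same route as the paper's: combine Lemma~\ref{lem1} with the $H$-iteration propagation of Update messages down the pseudo-tree and the decision rule $x_i = GB^{Itr-H+1}.x_i$ to conclude that at iteration $i+2H-1$ every agent acts on a common $GB^{i+H}$ whose fitness is at least as good as that of the best individual in $P$ up to iteration $i$. The paper's own proof is a four-line compression of this argument that leaves implicit exactly the three points you make explicit --- the propagation bound for Update messages, the agreement of all agents on the same version entry $GB^{i+H}$, and the invariant that an individual's stored fitness equals the true global cost of the assignment it encodes --- so your version fills genuine gaps rather than diverging from the intended argument.
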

\begin{proof}
At iteration $\mathbf{i+2H-1}$, all the agents make decisions about variable assignment using $\mathbf{GB^{i+H}}$. However, $\mathbf{GB^{i+H}}$ is the best individual known to the root up to iteration $\mathbf{i+H}$. We know from Lemma~\ref{lem1} that, at iteration $\mathbf{i+H}$, the root is aware of the best individual in \textit{P} at least up to iteration $\mathbf{i}$. Hence, the fitness of $\mathbf{GB^{i+H}}$ is at least equal to the best individual in \textit{P} up to iteration $\mathbf{i}$. Hence, at iteration $\mathbf{i+2H-1}$, it yields a global cost equal to the fitness of the best individual in \textit{P} at least up to iteration $\mathbf{i}$.
\end{proof}
\begin{proposition}
AED is anytime.
\end{proposition}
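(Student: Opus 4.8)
The plan is to show that the sequence of global costs realized by the agents' joint variable assignments across iterations is monotonically non-increasing, which is exactly the anytime property. Fix a minimization instance and write $f^*(i)$ for the fitness of the best individual that has appeared in $P$ at any iteration $\le i$. The first, trivial, observation I would record is that $f^*$ is itself non-increasing: the collection of individuals ever produced only grows with $i$, so $f^*(i+1)\le f^*(i)$.

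Next I would invoke Lemma~\ref{lem2}: at iteration $i+2H-1$ every agent $a_k$ sets $x_k\leftarrow GB^{i+H}.x_k$, and the resulting global cost equals $GB^{i+H}.fitness$, which by the lemma is at most $f^*(i)$. I would stress that this quantity is well defined as a single number because, by the $H$-iteration propagation bound behind Lemma~\ref{lem1}, an Update message carrying version tag $i+H$ reaches every agent within $H$ iterations; hence by iteration $i+2H-1$ all agents hold the same $GB^{i+H}$, the joint assignment at that iteration is coherent, and its cost is the fitness of one concrete individual of $P$.

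Then I would establish that the versioned global best is monotone in its tag: the root overwrites its current best only when a strictly better individual arrives (Procedure 2, lines 3--7), and every other agent only copies the individual carried by an Update message, so $GB^{j}.fitness$ is non-increasing in $j$. Combining this with the previous step, for every iteration $t\ge H$ the realized global cost equals $GB^{t-H+1}.fitness$, a non-increasing function of $t$, and each such value is the cost of an actual complete assignment, hence finite and attained. Therefore the quality of the solution maintained by AED never degrades, i.e.\ AED is anytime.

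The main obstacle is not the monotonicity bookkeeping but the \emph{synchronization} argument: making rigorous that at the relevant iteration all agents act on the same $GB$ version (so that ``the global cost at iteration $t$'' denotes a single well-defined value), and cleanly handling the warm-up iterations $1,\dots,H-1$ before the assignment rule in Procedure 2 (line 18) becomes active. Both reduce to the $H$-iteration propagation guarantee already used for Lemma~\ref{lem1}, so I would lean on that and keep the remainder of the argument purely a matter of indexing.
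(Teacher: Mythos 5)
Your proposal is correct and follows essentially the same route as the paper: both rest on Lemma~2 together with the observation that the best-so-far fitness in $P$ is non-increasing in the iteration index. The only cosmetic difference is that you channel the monotonicity through the version-tagged $GB^{j}$ values (and make explicit the synchronization and warm-up points the paper handles informally in the text around Procedure~2), whereas the paper simply applies Lemma~2 at iterations $i+2H-1$ and $i+2H-1+\delta$ and compares.
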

\begin{proof}
From Lemma~\ref{lem2}, the decisions regarding the variable assignments at iterations $\mathbf{i+2H-1}$ and $\mathbf{i+2H-1+\delta}$ yields a global cost equal to the fitness of the best individual in \textit{P} at least up to iterations $\mathbf{i}$ and  $\mathbf{i+\delta}$ ($\mathbf{\delta \ge 0}$), respectively. Now, the fitness of the best individual in \textit{P} up to iteration $\mathbf{i+\delta}$ is at most the fitness at iteration $\mathbf{i}$. So the global cost at iteration $\mathbf{i+\delta}$ is less than or equal to the same cost at iteration $\mathbf{i}$. As a consequence, the quality of the solution monotonically improves as the number of iterations increases. Hence, AED is anytime.   
\end{proof}
We now consider algorithm complexity. Assume, \textbf{n} is the number of agents, $\mathbf{|N|}$ is the number of neighbours and $\mathbf{|D|}$ is the domain size of an agent. In every iteration, an agent sends $2|N|$ messages during the Reproduction step. Additionally, at most $|N|$ messages are passed for each of the ANYTIME-UPDATE and Migration steps. Now, $|N|$ can be at most n (complete graph). Hence, the total number of messages transmitted per agent during an iteration is $O(4|N|) = O(n)$. Since the main component of a message in AED is the set of individuals, the size of a single message can be calculated as the size of an individual multiplied by the number of individuals. During the Reproduction, Migration and ANYTIME-UPDATE steps, at most ER individuals, each of which has size $O(n)$, is sent in a single message. As a result, the size of a single message is $O(ER*n)$. This makes the total message size per agent during an iteration $O(ER*n*n) = O(n^2)$.\par
Before Reproduction, $|P_{a_i}|$ can be at most $2ER*|N|$ (if Migration occurred in the previous iteration) and Reproduction will add $ER*|N|$ individuals. So the memory requirement per agent is $O(3*ER*|N|*n) = O(n^2)$. Finally, Reproduction using Equations ~\ref{eqw1},~\ref{eqw2},~\ref{eqw3},~\ref{eqg1} and ~\ref{eqdel} requires $|D_i|*|N|$ operations and in total $ER*|N|$ individuals are reproduced during an iteration per agent. Hence, the total computation complexity per agent during an iteration is $O(ER*|N|*|D|*|N|) = O(|D|*n^2)$.\par 

\section{Experimental Results}
In this section, we empirically evaluate the quality of solutions produced by AED compared to six different state-of-the-art DCOP algorithms. We show that AED asymptotically converges to solutions of quality higher than these six state-of-the-art algorithms. We select these algorithms to represent all four classes of non-exact algorithms. Firstly, among the local search algorithms, we pick DSA (type C, P = 0.8, this value of P yielded the best performance in our settings), MGM2 (with offer probability p = 0.5) and GDBA (N, NM, T; reported to perform the best \cite{okamoto2016distributed}). Secondly, among the inference-based non-exact algorithms, we compare with Max-Sum\_ADVP$-$ as it has empirically shown to perform significantly better than Max-Sum \cite{zivan2012max}. We used switching parameter that yielded best result between n and 2n, where n is the number of agents. Thirdly, we consider a sampling-based algorithm, namely PD-Gibbs, which is the only such algorithm that is suitable for large-scale DCOPs \cite{dgibbs}. Finally, we compare with ACO\_DCOP as it is only available population-based DCOP algorithm. To evaluate ACO\_DCOP, we use the same values of  the parameters recommended in \cite{Chen2018AnAA}. We discuss parameter settings of AED\footnote{For implementing $Select_{wrp}(.)$ we use Reservoir-sampling algorithm \cite{swrp}. For performing set operation we use constant time polynomial hashing.} in details later in this section. Additionally, we used the ALS framework for non-monotonic algorithms having no anytime update mechanism.\par 
We compare these algorithms on three different benchmarks. We consider random DCOPs for our first benchmark. Specifically, we set the number of agents to 70 and domain size to 10. We use Erd{\H{o}}s-R{\'e}nyi topology (i.e. random graph) to generate the constraint graphs with the value of $p = 0.1$ (i.e. sparse graph) \cite{erdHos1960evolution}. We then take constraint costs uniformly from the range $[1, 100]$. Our second benchmark is identical to the first setting except the value of $p = 0.6$ (i.e. dense graph). For our last benchmark, we consider weighted graph coloring problems with the number of agents 120, 3 colors per agent, Erd{\H{o}}s-R{\'e}nyi topology with $p = 0.05$ and constraint violation costs are selected uniformly from $[1,100]$. In all three settings, we run all algorithms on 70 independently generated problems and 30 times on each problem. Moreover, for stopping condition we consider both max-iteration and max-time. For max-iteration, we stop each of the algorithms after the 1000-th iteration. For max-time, we run each algorithm for 4 seconds, 25 seconds and 6 seconds for the aforementioned benchmarks 1, 2 and 3, respectively. In order to conduct these experiments, we use a \textit{GCP-n2-highcpu-64 instance}\footnote{64 Intel Skylake vCPU @ 2.0 GHZ and 58 GB RAM} - a cloud computing service which is publicly accessible at cloud.google.com. It is worth noting that all differences shown in Figures \ref{R10}, \ref{R60}, \ref{W05} and Table \ref{time} are statistically significant for $p-value<0.01$. \par
\begin{figure}[t]
\centering
  \includegraphics[scale = 0.75]{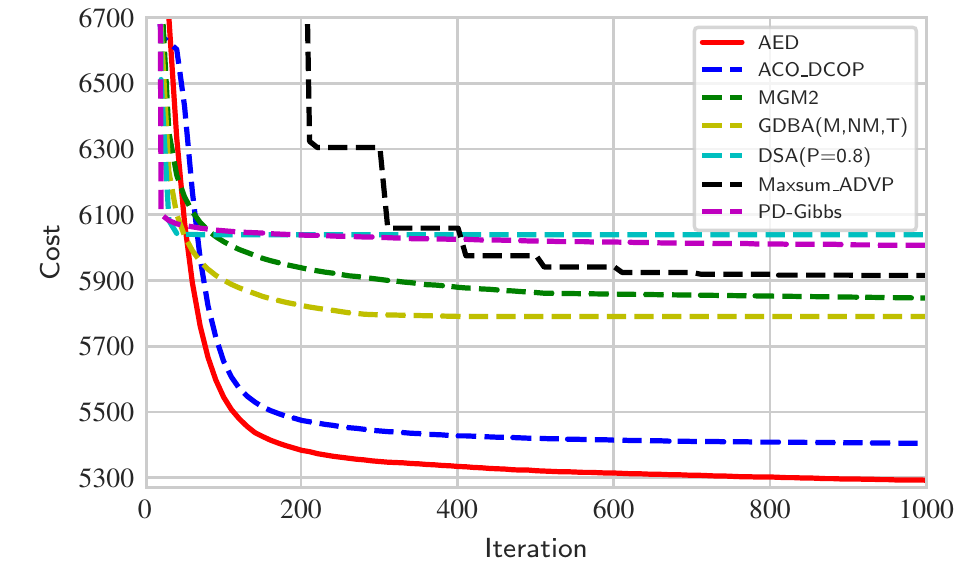}
  \vspace{-3mm}
  \caption{Comparison of AED and the benchmarking algorithms on a sparse configurations of random DCOPs.}
  \label{R10}
  \vspace{-2.5mm}
\end{figure}
Figure \ref{R10} shows a comparison between AED and the benchmarking algorithms on the sparse random DCOP benchmark after running an equal amount of iteration. On the other hand, the EXP-1 column of Table \ref{time} shows the comparison after running each algorithm an equal amount of time (4 seconds). The closest competitor to AED is ACO\_DCOP. Unlike other competitors, both of the population-based algorithms kept on improving the solution until the end of the run due to their superior capability of exploration. However, it can be observed from Table \ref{time} that AED produces 1.7\% better solution than ACO\_DCOP after running an equal amount of time. 
In contrast, most of the local search algorithms converge to local optima within 400 iterations (see Figure \ref{R10}) - with GDBA producing the best performance. After running an equal amount of time, AED outperforms GDBA by a $9\%$ and DSA by $14.9\%$. Finally, the other two representative algorithms, Max-Sum\_ADVP and PD-Gibbs are outperformed by $11.1\%-13.8\%$ margin. The superiority of AED in this experiment indicates that the Selection method along with the new Reproduction mechanism based on optimistic local benefit achieves a better balance between exploration and exploitation. This helps AED to explore until the end of the run and produce solutions with better quality than the state-of-the-art algorithms.\par 
\begin{figure}[t]
\centering
  \includegraphics[scale = 0.75]{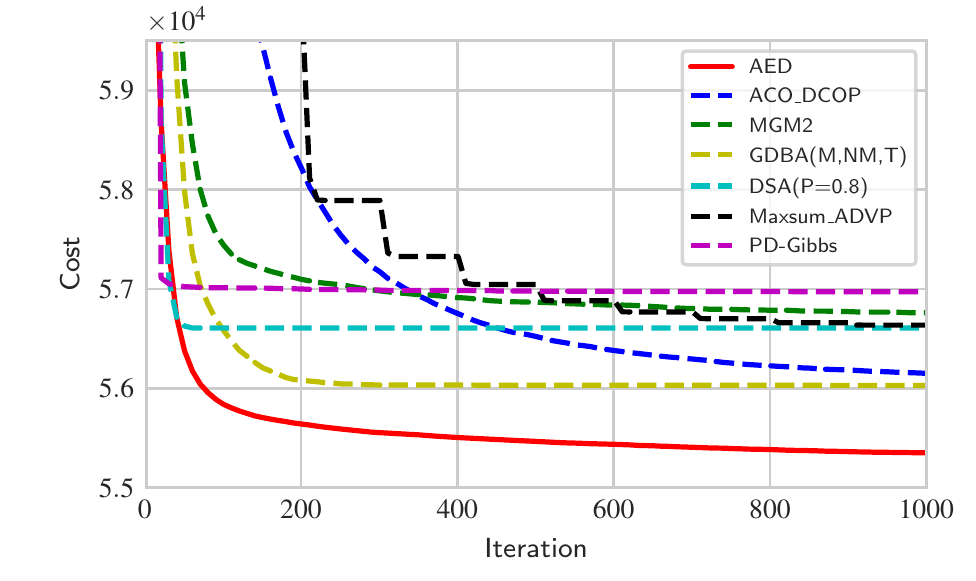}
  \vspace{-3mm}
  \caption{: Comparison of AED and the benchmarking algorithms on a dense configurations of random DCOPs.}
  \label{R60}
\end{figure}
\begin{figure}[t]
\centering
  \includegraphics[scale = 0.75]{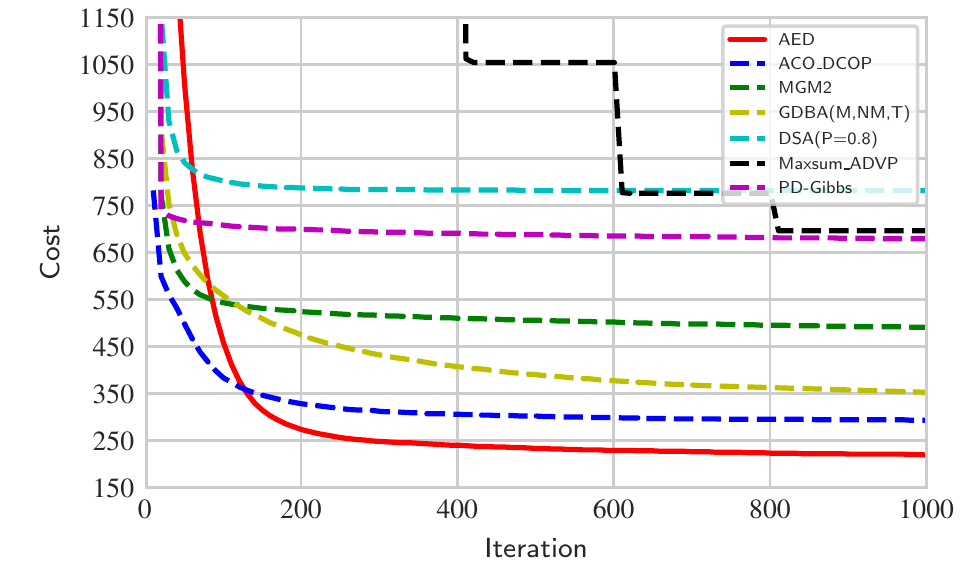}
  \vspace{-3mm}
  \caption{Comparison of AED and the benchmarking algorithms on weighted graph coloring problems.}
  \label{W05}
\end{figure}

Figure \ref{R60} shows a comparison between AED and other benchmarking algorithms on dense random DCOP benchmark. It clearly shows the advantage of AED over its competitors. To be exact, it outperforms the benchmarking algorithms by a margin of $0.7\%-3.0\%$ after running an equal amount of time (25 seconds). In this benchmatk, most of the algorithms find results of similar quality with a slight variation. Among the competitors, GDBA outperforms ACO\_DCOP by a slight margin till 1000-th iteration. However, after running an equal amount of time, ACO\_DCOP manages to produce better solutions and becomes the closest competitor to AED. PD-Gibbs fails to explore much through sampling and converges quickly while producing the most substantial performance difference with AED. It is also worth noting that  ACO\_DCOP takes 1000 iterations to produce a similar quality solution that is found by AED at the expense of only 70 iterations.\par

\begin{table}[!h]
\centering
\setlength{\tabcolsep}{2.5pt}
\caption{Comparison of AED and the benchmarking algorithms using Max-Time as Stopping condition.}
\label{time}
\begin{tabular}{|l|l|l|l|}
\hline
\textbf{Algorithm} & \textbf{EXP - 1}       & \textbf{EXP - 2}        & \textbf{EXP - 3}      \\ \hline
DSA       & 6076          & 56799          & 781          \\ \hline
MGM-2     & 5775          & 56780          & 486          \\ \hline
GDBA      & 5770          & 56051          & 310          \\ \hline
PD-Gibbs  & 6021          & 56985          & 682          \\ \hline
MS\_ADVP  & 5877          & 56786          & 625          \\ \hline
ACO\_DCOP & 5380          & 55735          & 291          \\ \hline
\textbf{AED}       & \textbf{5289} & \textbf{55347} & \textbf{229} \\ \hline
\end{tabular}
\end{table}
Figure~\ref{W05} shows a comparison between AED and the other benchmarking algorithms on weighted graph colouring problems. In this experiment, AED demonstrates its excellent performance by outperforming other algorithms by a significant margin. Among the benchmarking algorithms, ACO\_DCOP is the closest but still outperformed by AED by a 27\% margin. Among the local search algorithms, GDBA is the most competitive, but AED still finds solutions that are 35\% better. Finally, it improves the quality of solutions around $1.73-2.4$ times over some of its competitors, namely DSA, Max-Sum\_ADVP and PD-Gibbs after running an equal amount of time. Through this experiment, it is also evident that AED can also be an effective algorithm for DCSPs\par
\begin{figure}[!h]
\centering
  \includegraphics[scale = 0.75]{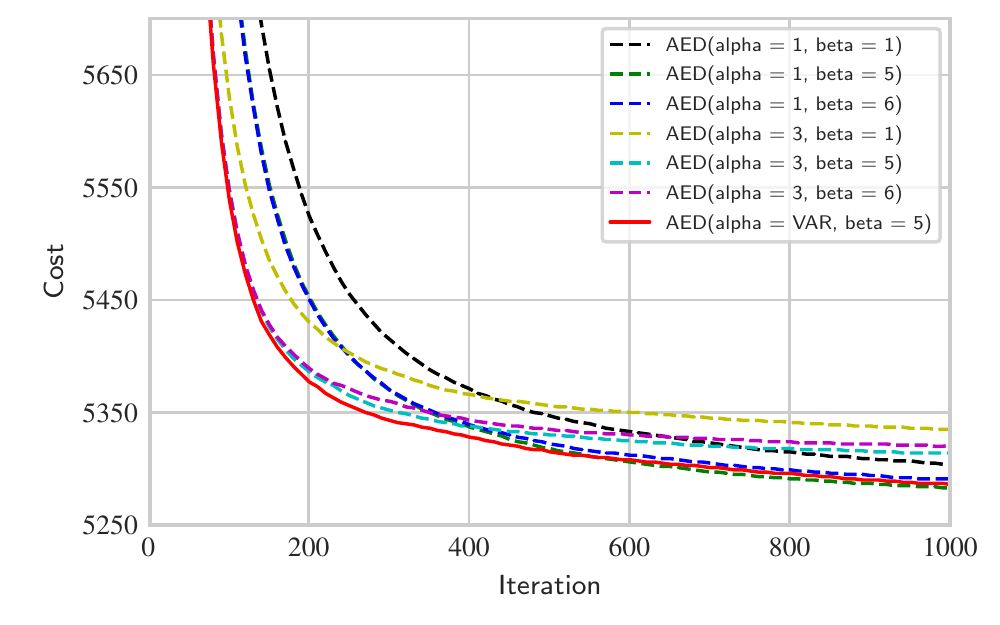}
  \vspace{-3mm}
  \caption{Performance of AED for different $\alpha$ and $\beta$ on a sparse configurations of random DCOPs.}
  \label{abd}
\end{figure}
Now we consider the effects of different parameters on the benchmarks. Firstly, for all three benchmarks, we set $IN$, which defines the initial population size to 50. A small value of $IN$ will affect the exploration of AED. However, after 50, it does not have any significant effect on the solution quality. Secondly, for all three benchmarks, we set the migration interval, $MI$ to 5. Through the Migration process, individuals get to traverse the network and different agents get to change different variables of an individual. Hence, Migration works as an implicit cooperation mechanism. If the value of $MI$ is set too high, convergence will slow down due to a lack of cooperation. On the other hand, when it is set too low, the population will lack diversity as different sub-population will mix fast. Thirdly, we show the effect of parameter $ER$ on solution quality and memory requirement on Tables \ref{ERR}. $ER$ effectively determines the population size. When it is set too low, exploration will suffer. However, as we increase $ER$ after a certain threshold, it does not improve solution quality by any significant margin. We specifically highlight the different values of ER we used in different benchmarks on Table \ref{ERR}. Notice that even with the small value of $ER=5$ AED is outperforming the benchmarking algorithms.\par        
\begin{table}[t]
	\centering
	\setlength{\tabcolsep}{2.5pt}
	\caption{Solution Quality \& Memory Requirement Per-Agent of AED for different ER value.}
	\label{ERR}
	\begin{tabular}{|l|l|l|l|l|l|l|}
		\hline
		\multicolumn{1}{|c|}{\textbf{ER}} & \multicolumn{3}{c|}{\textbf{Solution Quality}}                                                                  & \multicolumn{3}{c|}{\textbf{Memory (KB)}}                                                             \\ \cline{2-7} 
		\multicolumn{1}{|c|}{}                             & \multicolumn{1}{c|}{\textbf{EXP-1}} & \multicolumn{1}{c|}{\textbf{EXP-2}} & \multicolumn{1}{c|}{\textbf{EXP-3}} & \multicolumn{1}{c|}{\textbf{EXP-1}} & \multicolumn{1}{c|}{\textbf{EXP-2}} & \multicolumn{1}{c|}{\textbf{EXP-3}} \\ \hline
		05                                                 & 5378                                & 55550                               & 275                                 & 39                                  & 188                                 & 53                                  \\ \hline
		10                                                 & 5346                                & 55450                               & 252                                 & 69                                  & 367                                 & 97                                  \\ \hline
		20                                                 & 5316                                & \textbf{55347}                      & 240                                 & 129                                 & \textbf{725}                        & 184                                 \\ \hline
		40                                                 & \textbf{5289}                       & 55325                               & \textbf{229}                        & \textbf{248}                        & 1441                                & \textbf{358}                        \\ \hline
		50                                                 & 5285                                & 55310                               & 223                                 & 308                                 & 1899                                & 445                                 \\ \hline
	\end{tabular}
\end{table}
Finally, we depict the effect of $\alpha$ and $\beta$ in Figure \ref{abd}. While keeping $\alpha$ constant as we increase $\beta$, both the solution quality and the convergence rate increase up to a threshold. After that, the convergence rate does not change much but the solution quality starts to suffer. As we increase $\beta$, the Reproduction mechanism starts to exploit more than explore. At the threshold value, the balance between exploitation and exploration becomes optimal. After that, when we increase $\beta$, the exploration starts to suffer. Hence, this phenomenon occurs. In Figure \ref{abd}, we observe that this $\beta$ threshold is 5 (Benchmark 1). For Benchmark 2, we have found this threshold to be 5 and the value is 2 for Benchmark 3. On the other hand, as we increase $\alpha$, the convergence rate increases but the solution quality decreases. In order to mitigate this problem, we use a variable $\alpha$. To be precise, in the first 150 iterations, we use $\alpha = 3$. We then use $\alpha = 2$ in the following 150 iterations, and for the rest of the iterations we consider $1$ as the value of $\alpha$. Figure~\ref{abd} shows that  $alpha=VAR$ yields a similar solution quality as $\alpha = 1$; however, the convergence rate is near $\alpha = 3$.

\section{Conclusions}
In this paper, we introduce a novel algorithm called AED that effectively uses evolutionary optimization to solve DCOPs. To incorporate the anytime property in AED, we also present a new anytime update mechanism. In our theoretical evaluation, we prove that AED is anytime. Finally, we present empirical results that show that AED outperforms state-of-the-art non-exact algorithms by $1.7\%-14.9\%$ on sparse random DCOPs, $0.7\%-3.0\%$ on dense random DCOPs. More notably, AED produces $0.27-2.4$ times better solutions on weighted graph colouring problems. These results demonstrate the significance of applying evolutionary optimization techniques in solving DCOPs. In the future, we intend to investigate whether this algorithm can be applied to solve continuous-valued and multi-objective DCOPs. 
\section{Acknowledgments}
This research is partially supported by the ICT Division of Bangladesh Government and University Grants Commission of Bangladesh.

\clearpage
\bibliographystyle{ACM-Reference-Format}  
\bibliography{sample-bibliography}  

\end{document}